\theoremstyle{plain}
\newtheorem{theorem}{Theorem}[section]
\newtheorem{lemma}[theorem]{Lemma}
\newtheorem{proposition}[theorem]{Proposition}
\theoremstyle{definition}
\newtheorem{definition}[theorem]{Definition}
\theoremstyle{remark}
\newtheorem{remark}[theorem]{Remark}
\numberwithin{equation}{section}
\begin{document}

\title{\textbf{Is the solution to the BCS gap equation continuous in the temperature ?}}

\author{Shuji Watanabe\\
Division of Mathematical Sciences\\
Graduate School of Engineering, Gunma University\\
4-2 Aramaki-machi, Maebashi 371-8510, Japan\\
Email: shuwatanabe@gunma-u.ac.jp}

\date{}

\maketitle

\begin{abstract}
We regard the BCS gap equation in superconductivity as a nonlinear integral equation on a Banach space consisting of continuous functions of both $T$ and $x$. Here, $T (\geq 0)$ stands for the temperature and $x$ the kinetic energy of an electron minus the chemical potential. We show that the unique solution to the BCS gap equation is continuous with respect to both $T$ and $x$ when $T$ is less than or equal to a certain value. The proof is carried out based on the Banach fixed-point theorem.

\medskip

\noindent Mathematics Subject Classification 2010. \    47G10, 47H10, 47N50, 82D55.

\medskip

\noindent Keywords. \    Continuity, solution to the BCS gap equation, nonlinear integral equation, Banach fixed-point theorem, superconductivity.
\end{abstract}


\section{Introduction and preliminaries}

We use the unit $k_B=1$, where $k_B$ stands for the Boltzmann constant. Let $\omega_D>0$ and $k\in\mathbb{R}^3$ stand for the Debye frequency and the wave vector of an electron, respectively. Let $m>0$ and $\mu>0$ stand for the electron mass and the chemical potential, respectively. We denote by $T (\geq 0)$ the temperature, and by $x$ the kinetic energy of an electron minus the chemical potential, i.e., $x=\hslash^2|k|^2/(2m)-\mu$. Note that $0<\hslash\omega_D<<\mu$.

In the BCS model (see \cite{bcs, bogoliubov}) of superconductivity, the solution to the BCS gap equation \eqref{eq:gapequation} below is called the gap function. We regard the gap function as a function of both $T$ and $x$, and denote it by $u$, i.e., $u: \, (T,\, x) \mapsto u(T,\, x)$ $(\geq 0)$. The BCS gap equation is the following nonlinear integral equation:
\begin{equation}\label{eq:gapequation}
u(T,\, x)=\int_{\varepsilon}^{\hslash\omega_D}
\frac{U(x,\,\xi)\, u(T,\,\xi)}{\,\sqrt{\,\xi^2+u(T,\,\xi)^2\,}\,}\,
\tanh \frac{\,\sqrt{\,\xi^2+u(T,\,\xi)^2\,}\,}{2T}\, d\xi,
\quad \varepsilon \leq x \leq \hslash\omega_D \, ,
\end{equation}
where $U(x,\,\xi)>0$ is the potential multiplied by the density of states per unit energy at the Fermi surface and is a function of $x$ and $\xi$. In \eqref{eq:gapequation} we introduce $\varepsilon>0$, which is small enough and fixed $(0<\varepsilon<<\hslash\omega_D)$. In the BCS model, the integration interval is $[0, \, \hslash\omega_D]$. However, we introduce very small $\varepsilon>0$ for the following mathematical reason. In order to show the continuity of the solution to the BCS gap equation with respect to the temperature, we make the form of the BCS gap equation somewhat easier to handle. That is to say, we let the integration interval be the closed interval $[\varepsilon, \, \hslash\omega_D]$ as in \eqref{eq:gapequation}.

It is known that the BCS gap equation \eqref{eq:gapequation} is based on a superconducting state called the BCS state. In this connection, see \cite[(6.1)]{watanabe} for a new gap equation based on a superconducting state having a lower energy than the BCS state.

The integral with respect to $\xi$ in \eqref{eq:gapequation} is sometimes replaced by the integral over $\mathbb{R}^3$ with respect to the wave vector $k$. Odeh \cite{odeh}, and Billard and Fano \cite{billardfano} established the existence and the uniqueness of the positive solution to the BCS gap equation in the case $T=0$. In the case $T\geq 0$, Vansevenant \cite{vansevesant} determined the transition temperature (the critical temperature) and showed that there is a unique positive solution to the BCS gap equation. Recently, Hainzl, Hamza, Seiringer and Solovej \cite{hhss} proved that the existence of a positive solution to the BCS gap equation is equivalent to the existence of a negative eigenvalue of a certain linear operator to show the existence of a transition temperature. Hainzl and Seiringer \cite{haizlseiringer} also derived upper and lower bounds on the transition temperature and the energy gap for the BCS gap equation. Moreover, Frank, Hainzl, Naboko and Seiringer \cite{fhns} gave a rigorous analysis of the asymptotic behavior of the transition temperature at weak coupling.

Since the existence and the uniqueness of the solution are established for fixed $T$ in previous papers, the temperature dependence of the solution is not covered. Studying the temperature dependence of the solution is very important. This is because, by dealing with the thermodynamical potential, this study leads to the mathematical challenge of showing that the transition to a superconducting state is a second-order phase transition. Indeed, when one tries to show that the transition to a superconducting state is a second-order phase transition, one has to differentiate the thermodynamical potential with respect to the temperature twice. Since the form of the thermodynamical potential includes the solution to the BCS gap equation, one has to differentiate the solution with respect to the temperature twice. So it is highly desirable to show the smoothness of the solution with respect to the temperature.

In this paper we address the continuity of the solution to the BCS gap equation \eqref{eq:gapequation} with respect to the temperature. We regard the BCS gap equation \eqref{eq:gapequation} as a nonlinear integral equation on a Banach space consisting of continuous functions of both $T$ and $x$. On the basis of the Banach fixed-point theorem, we show that the solution to the BCS gap equation \eqref{eq:gapequation} is continuous with respect to both $T$ and $x$ when $T$ is less than or equal to a certain value.

Let
\begin{equation}\label{eq:potentialu}
U(x,\,\xi)=U_1 \qquad \mbox{at all} \quad (x,\,\xi) \in [\varepsilon, \, \hslash\omega_D]^2,
\end{equation}
where $U_1>0$ is a constant. Then the gap function depends on the temperature $T$ only. In this case, we denote the gap function by $\Delta_1$, i.e., $\Delta_1:\, T \mapsto \Delta_1(T)$. Then \eqref{eq:gapequation} leads to the simple gap equation
\begin{equation}\label{eq:smplgapequation}
1=U_1\int_{\varepsilon}^{\hslash\omega_D}
 \frac{1}{\,\sqrt{\,\xi^2+\Delta_1(T)^2\,}\,}\,
 \tanh \frac{\, \sqrt{\,\xi^2+\Delta_1(T)^2\,}\,}{2T}\,d\xi.
\end{equation}

We now define the temperature $\tau_1>0$, which is the transition temperature originating from the simple gap equation \eqref{eq:smplgapequation}.
\begin{definition}[\cite{bcs}]\label{dfn:tcsmpl}
The transition temperature is the temperature $\tau_1>0$ satisfying
\[
1=U_1\int_{\varepsilon}^{\hslash\omega_D}
\frac{1}{\,\xi\,}\,\tanh \frac{\xi}{\,2\tau_1\,}\,d\xi.
\]
\end{definition}

\begin{remark}
There is another definition of the transition temperature, which originates from the BCS gap equation \eqref{eq:gapequation}. See \cite[Definition 2.5]{watanabe3}.
\end{remark}

The BCS model makes the assumption that there is a unique solution $\Delta_1: T \mapsto \Delta_1(T)$ to the simple gap equation \eqref{eq:smplgapequation} and that it is of class $C^2$ with respect to the temperature $T$ (see e.g. \cite{bcs} and \cite[(11.45), p.392]{ziman}). The author \cite{watanabe2} gave a mathematical proof of this assumption on the basis of the implicit function theorem. Set
\begin{equation}\label{eq:delta0}
\Delta=\frac{\,
\sqrt{ \left( \hslash\omega_D-\varepsilon\, e^{1/U_1} \right)
\left( \hslash\omega_D-\varepsilon\, e^{-1/U_1} \right) }\,}
{\,\sinh\frac{1}{\,U_1\,}\,}.
\end{equation}

\begin{proposition}[{\cite[Proposition 2.2]{watanabe2}}]\label{prp:simplegap}
Let $\Delta$ be as in \eqref{eq:delta0}. Then there is a unique nonnegative solution $\Delta_1: [\,0,\,\tau_1\,] \to [0,\,\infty)$ to the simple gap equation \eqref{eq:smplgapequation} such that the solution $\Delta_1$ is continuous and strictly decreasing on the closed interval $[\,0,\,\tau_1\,]$:
\[
\Delta_1(0)=\Delta>\Delta_1(T_1)>\Delta_1(T_2)>\Delta_1(\tau_1)=0, \qquad 0<T_1<T_2<\tau_1.
\]
Moreover, it is of class $C^2$ on the interval $[\,0,\,\tau_1\,)$ and satisfies
\[
\Delta_1'(0)=\Delta_1''(0)=0 \quad \mbox{and} \quad \lim_{T\uparrow \tau_1} \Delta_1'(T)=-\infty.
\]
\end{proposition}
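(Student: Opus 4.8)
The plan is to reduce \eqref{eq:smplgapequation} to a one‑dimensional problem. Define, for $T>0$ and $\Delta\geq 0$,
\[
g(T,\,\Delta)=U_1\int_{\varepsilon}^{\hslash\omega_D}\frac{1}{\,\sqrt{\,\xi^2+\Delta^2\,}\,}\,\tanh\frac{\,\sqrt{\,\xi^2+\Delta^2\,}\,}{2T}\,d\xi-1,
\]
and $g(0,\,\Delta)=U_1\int_{\varepsilon}^{\hslash\omega_D}(\xi^2+\Delta^2)^{-1/2}d\xi-1$, so that solving \eqref{eq:smplgapequation} at temperature $T$ is exactly the problem of finding a nonnegative zero of $g(T,\,\cdot\,)$. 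First I would record two monotonicity facts: putting $s=\sqrt{\xi^2+\Delta^2}$, the map $s\mapsto s^{-1}\tanh(s/2T)$ is strictly decreasing on $(0,\infty)$ (differentiate and reduce to the elementary inequality $y\,\operatorname{sech}^2 y<\tanh y$ for $y>0$), and $T\mapsto\tanh(s/2T)$ is strictly decreasing. Consequently $g$ is jointly continuous on $[0,\infty)\times[0,\infty)$ — the limit $T\to 0^+$ being uniform on compacta because $s\geq\varepsilon$ — and strictly decreasing in each of its two variables separately.

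Next I would solve $g(T,\,\cdot\,)=0$ for each $T\in[0,\tau_1]$. When $T=0$ the integral is elementary and $g(0,\,\Delta)=U_1\ln\dfrac{\hslash\omega_D+\sqrt{(\hslash\omega_D)^2+\Delta^2}}{\varepsilon+\sqrt{\varepsilon^2+\Delta^2}}-1$, which decreases strictly from $U_1\ln(\hslash\omega_D/\varepsilon)-1>0$ (positive since $\varepsilon$ is small enough) to $-1$; clearing the square roots in $g(0,\Delta)=0$ gives precisely the closed form \eqref{eq:delta0}, so $\Delta_1(0)=\Delta$. For $0<T<\tau_1$ one has $g(T,\,0)=U_1\int_{\varepsilon}^{\hslash\omega_D}\xi^{-1}\tanh(\xi/2T)\,d\xi-1>0$, because this quantity is strictly decreasing in $T$ and vanishes at $\tau_1$ by Definition \ref{dfn:tcsmpl}; since $g(T,\,\Delta)\to-1$ as $\Delta\to\infty$, strict monotonicity in $\Delta$ produces a unique zero $\Delta_1(T)>0$. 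The same argument gives $\Delta_1(\tau_1)=0$ and shows that $g(T,\,\cdot\,)$ has no nonnegative zero for $T>\tau_1$, so the natural domain is exactly $[0,\tau_1]$ and $\Delta_1$ is uniquely determined there. Strict monotonicity of $T\mapsto\Delta_1(T)$ is then a short comparison: if $T_1<T_2$ then $g(T_1,\Delta_1(T_2))>g(T_2,\Delta_1(T_2))=0=g(T_1,\Delta_1(T_1))$, whence $\Delta_1(T_2)<\Delta_1(T_1)$ because $g(T_1,\,\cdot\,)$ is strictly decreasing.

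For continuity I would use the standard subsequence argument: $\Delta_1$ is bounded by $\Delta_1(0)=\Delta$, and if $T_n\to T_0$ then every subsequential limit $L$ of $\bigl(\Delta_1(T_n)\bigr)$ satisfies $g(T_0,L)=0$ by joint continuity of $g$, hence $L=\Delta_1(T_0)$; therefore $\Delta_1(T_n)\to\Delta_1(T_0)$. On the open interval $(0,\tau_1)$ the value $\Delta_1(T)$ is positive and $g$ is $C^\infty$ (differentiate under the integral) with $\partial_\Delta g<0$ there, so the implicit function theorem yields $\Delta_1\in C^\infty\subset C^2$ on $(0,\tau_1)$, with $\Delta_1'=-\partial_T g/\partial_\Delta g$. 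The endpoint behaviour is the delicate part. Near $T=0$: differentiating under the integral shows that $\partial_T g$ and all its higher $T$‑ and $\Delta$‑derivatives are $O\bigl(T^{-N}e^{-\varepsilon/T}\bigr)\to 0$ as $T\to 0^+$ (the $\tanh$ and $\operatorname{sech}^2$ factors decay faster than any power of $1/T$ since $s\geq\varepsilon$), while $\partial_\Delta g(T,\Delta_1(T))\to\partial_\Delta g(0,\Delta)\neq 0$; hence $\Delta_1'(T)\to 0$, and differentiating the relation once more gives $\Delta_1''(T)\to 0$, as $T\to 0^+$. A mean value theorem argument then upgrades these limits to $\Delta_1'(0)=\Delta_1''(0)=0$ and to $\Delta_1\in C^2$ on $[0,\tau_1)$. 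Near $T=\tau_1$ we have $\Delta_1(T)\to 0^+$; since $\partial_\Delta g(T,\Delta)=U_1\Delta\int_{\varepsilon}^{\hslash\omega_D}s^{-1}\frac{d}{ds}\!\bigl(s^{-1}\tanh(s/2T)\bigr)d\xi=O(\Delta)$, we get $\partial_\Delta g(T,\Delta_1(T))\to 0^-$, whereas $\partial_T g(T,\Delta_1(T))\to\partial_T g(\tau_1,0)=-\frac{U_1}{2\tau_1^2}\int_{\varepsilon}^{\hslash\omega_D}\operatorname{sech}^2(\xi/2\tau_1)\,d\xi<0$; therefore $\Delta_1'(T)=-\partial_T g/\partial_\Delta g\to-\infty$.

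I expect the two endpoint analyses to be the main obstacles: establishing enough joint smoothness of $g$ up to $T=0$ together with the exponential‑decay estimates needed to conclude $\Delta_1\in C^2$ with $\Delta_1'(0)=\Delta_1''(0)=0$, and quantifying the rate at which $\partial_\Delta g(T,\Delta_1(T))\to 0$ as $T\uparrow\tau_1$ so as to pin down the blow‑up of $\Delta_1'$. By contrast, the closed‑form identity \eqref{eq:delta0} for $\Delta_1(0)$ is only a routine (if slightly tedious) algebraic manipulation, and the existence, uniqueness and strict monotonicity are short consequences of the two monotonicity properties of $g$.
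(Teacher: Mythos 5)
Your proposal is correct and follows essentially the route of the proof this paper relies on: the paper does not reprove Proposition \ref{prp:simplegap} but cites \cite{watanabe2}, whose argument is likewise based on the implicit function theorem applied to $g(T,\Delta)=U_1\int_{\varepsilon}^{\hslash\omega_D}(\xi^2+\Delta^2)^{-1/2}\tanh\bigl(\sqrt{\xi^2+\Delta^2}/(2T)\bigr)\,d\xi-1$ together with the monotonicity of $s\mapsto s^{-1}\tanh(s/(2T))$ and $T\mapsto\tanh(s/(2T))$. Your handling of the two endpoints (exponential decay of the $T$-derivatives of $g$ giving $\Delta_1'(0)=\Delta_1''(0)=0$, and $\partial_\Delta g(T,\Delta_1(T))=O(\Delta_1(T))\to 0^-$ with $\partial_T g(\tau_1,0)<0$ giving $\Delta_1'(T)\to-\infty$) is sound, as is the algebra recovering \eqref{eq:delta0}.
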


\begin{remark}
We set $\Delta_1(T)=0$ for $T>\tau_1$.
\end{remark}

Let $0<U_1<U_2$ , where $U_2>0$ is a constant. We assume the following:
\begin{equation}\label{eq:condition}
U_1 \leq U(x,\,\xi) \leq U_2 \quad \mbox{at all} \quad (x,\,\xi) \in [\varepsilon,\, \hslash\omega_D]^2, \qquad U(\cdot,\,\cdot) \in C([\varepsilon,\, \hslash\omega_D]^2).
\end{equation}
When $U(x,\,\xi)=U_2$ at all $(x,\,\xi) \in [\varepsilon,\, \hslash\omega_D]^2$, an argument similar to that in Proposition \ref{prp:simplegap} gives that there is a unique nonnegative solution $\Delta_2: [\,0,\,\tau_2\,] \to [0,\,\infty)$ to the simple gap equation
\begin{equation}\label{eq:smplgapequation2}
1=U_2\int_{\varepsilon}^{\hslash\omega_D}
 \frac{1}{\,\sqrt{\,\xi^2+\Delta_2(T)^2\,}\,}\,
 \tanh \frac{\, \sqrt{\,\xi^2+\Delta_2(T)^2\,}\,}{2T}\,d\xi, \qquad
0\leq T\leq \tau_2.
\end{equation}
Here, $\tau_2>0$ is defined by
\[
1=U_2\int_{\varepsilon}^{\hslash\omega_D}
\frac{1}{\,\xi\,}\,\tanh \frac{\xi}{\,2\tau_2\,}\,d\xi.
\]
We again set $\Delta_2(T)=0$ for $T>\tau_2$.

\begin{lemma}[{\cite[Lemma 1.5]{watanabe3}}] \quad {\rm (a)} The inequality $\tau_1<\tau_2$ holds.

\noindent {\rm (b)} If \   $0\leq T<\tau_2$, then $\Delta_1(T)<\Delta_2(T)$. If \  $T\geq \tau_2$, then $\Delta_1(T)=\Delta_2(T)=0$.
\end{lemma}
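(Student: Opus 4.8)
The plan is to derive both statements from the strict monotonicity, in the relevant variable, of the integral on the right-hand side of the simplified gap equations, and then to compare the two cases using $U_1<U_2$.

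For part~(a) I would introduce, for $T>0$,
\[
g(T)=\int_{\varepsilon}^{\hslash\omega_D}\frac{1}{\,\xi\,}\,\tanh\frac{\xi}{\,2T\,}\,d\xi .
\]
Since $y\mapsto\tanh y$ is strictly increasing and, for each fixed $\xi\in[\varepsilon,\hslash\omega_D]$, the map $T\mapsto\xi/(2T)$ is strictly decreasing, the integrand is strictly decreasing in $T$; hence $g$ is continuous and strictly decreasing on $(0,\infty)$, with $g(T)\to\ln(\hslash\omega_D/\varepsilon)>0$ as $T\downarrow0$ and $g(T)\to0$ as $T\to\infty$ (so in particular $\tau_1$ and $\tau_2$ are unique). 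By Definition~\ref{dfn:tcsmpl} and the definition of $\tau_2$ we have $g(\tau_1)=1/U_1$ and $g(\tau_2)=1/U_2$, and $U_1<U_2$ gives $g(\tau_1)=1/U_1>1/U_2=g(\tau_2)$; strict monotonicity of $g$ then forces $\tau_1<\tau_2$.

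For part~(b) I would set, for $T\ge0$ and $\Delta\ge0$,
\[
h(T,\Delta)=\int_{\varepsilon}^{\hslash\omega_D}\frac{1}{\,\sqrt{\xi^2+\Delta^2}\,}\,\tanh\frac{\sqrt{\xi^2+\Delta^2}}{2T}\,d\xi ,
\]
the $\tanh$-factor being read as $1$ when $T=0$, as in \eqref{eq:gapequation}, and prove that $h(T,\cdot)$ is strictly decreasing on $[0,\infty)$ for every fixed $T\ge0$. The heart of this is the pointwise claim that $s\mapsto s^{-1}\tanh(s/2T)$ is strictly decreasing on $(0,\infty)$; with $y=s/(2T)$ this amounts to the elementary inequality $y\,(1-\tanh^2 y)<\tanh y$ for $y>0$, which I would prove by noting that $\tanh y-y(1-\tanh^2 y)$ vanishes at $y=0$ and has derivative $2y\,\tanh y\,(1-\tanh^2 y)>0$ for $y>0$ (for $T=0$ the integrand is simply $1/s$). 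Since $s=\sqrt{\xi^2+\Delta^2}$ is strictly increasing in $\Delta\ge0$, it follows that $h(T,\cdot)$ is strictly decreasing.

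With this in hand I would split into cases. If $\tau_1\le T<\tau_2$, then $\Delta_1(T)=0$ (by Proposition~\ref{prp:simplegap} and the convention that follows it) while $\Delta_2(T)>0$ (by the analogue of Proposition~\ref{prp:simplegap} for $U_2$), so $\Delta_1(T)<\Delta_2(T)$. If $0\le T<\tau_1$, then $\Delta_1(T),\Delta_2(T)>0$, equations \eqref{eq:smplgapequation} and \eqref{eq:smplgapequation2} read $h(T,\Delta_1(T))=1/U_1$ and $h(T,\Delta_2(T))=1/U_2$, whence $h(T,\Delta_1(T))>h(T,\Delta_2(T))$ and the strict monotonicity of $h(T,\cdot)$ gives $\Delta_1(T)<\Delta_2(T)$ (the case $T=0$ is included here, giving $\Delta=\Delta_1(0)<\Delta_2(0)$). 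Finally, if $T\ge\tau_2$, then $T>\tau_1$ by part~(a), so $\Delta_1(T)=\Delta_2(T)=0$ by the conventions fixed after Proposition~\ref{prp:simplegap} and after \eqref{eq:smplgapequation2}. The only step that is not pure bookkeeping is the pointwise inequality $y(1-\tanh^2 y)<\tanh y$ behind the monotonicity of $h$ in $\Delta$, and I expect that (though elementary) to be the main technical point; the remainder is a careful enumeration of the ranges of $T$ together with the boundary values of $\Delta_1$ and $\Delta_2$ supplied by Proposition~\ref{prp:simplegap} and its $U_2$-analogue.
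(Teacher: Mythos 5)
Your proof is correct. Note that the paper itself gives no argument for this lemma --- it is quoted from \cite[Lemma 1.6]{watanabe3} without proof --- so there is nothing in the present text to compare against; your route (strict decrease of $T\mapsto\int_{\varepsilon}^{\hslash\omega_D}\xi^{-1}\tanh\frac{\xi}{2T}\,d\xi$ for part (a), and strict decrease in $\Delta$ of the gap integral via the elementary inequality $y(1-\tanh^2y)<\tanh y$, combined with $U_1<U_2$ and the conventions $\Delta_k(T)=0$ for $T>\tau_k$, for part (b)) is exactly the natural argument, and in fact the same monotonicity of $Z\mapsto\frac{\tanh Z}{Z}$ is the device the paper itself uses later in the contraction estimate of Section 3.
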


Let $0 \leq T \leq \tau_2$ and fix $T$. The author considered the Banach space $C([\varepsilon,\, \hslash\omega_D])$ consisting of continuous functions of $x$ only, and dealt with the following subset $V_T$:
\begin{equation}\label{eq:vt}
V_T=\left\{ u(T,\,\cdot) \in C([\varepsilon,\, \hslash\omega_D]): \; \Delta_1(T) \leq u(T,\,x) \leq \Delta_2(T) \;\; \mbox{at} \;\; x \in [\varepsilon,\, \hslash\omega_D] \right\}.
\end{equation}

\begin{remark}
The set $V_T$ depends on $T$. So we denote each element of $V_T$ by $u(T,\,\cdot)$.
\end{remark}

In order to show how the solution varies with the temperature, the author \cite{watanabe3} gave another proof of the existence and the uniqueness of the solution to the BCS gap equation \eqref{eq:gapequation}; the author showed that the unique solution belongs to $V_T$:
\begin{theorem}[{\cite[Theorem 2.2]{watanabe3}}]\label{thm:solutionu0}
Let $T \in [0,\, \tau_2]$ be fixed. Then there is a unique nonnegative solution $u_0(T,\,\cdot) \in V_T$ to the BCS gap equation \eqref{eq:gapequation}:
\[
u_0(T,\, x)=\int_{\varepsilon}^{\hslash\omega_D}
\frac{U(x,\,\xi)\, u_0(T,\, \xi)}{\,\sqrt{\,\xi^2+u_0(T,\, \xi)^2\,}\,}\,
\tanh \frac{\,\sqrt{\,\xi^2+u_0(T,\, \xi)^2\,}\,}{2T}\, d\xi, \quad
x \in [\varepsilon,\, \hslash\omega_D].
\]
Consequently, the solution is continuous with respect to $x$ and varies with the temperature as follows:
\[
\Delta_1(T) \leq u_0(T,\, x) \leq \Delta_2(T) \quad \mbox{at} \quad
(T,\,x) \in [0,\, \tau_2] \times [\varepsilon,\, \hslash\omega_D].
\]
\end{theorem}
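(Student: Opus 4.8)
The plan is to fix $T\in[0,\,\tau_2]$, regard the right-hand side of \eqref{eq:gapequation} as a nonlinear operator on the Banach space $C([\varepsilon,\,\hslash\omega_D])$ (functions of $x$ only, with $T$ a parameter),
\[
(A_Tw)(x)=\int_{\varepsilon}^{\hslash\omega_D}\frac{U(x,\,\xi)\,w(\xi)}{\sqrt{\,\xi^2+w(\xi)^2\,}}\,\tanh\frac{\,\sqrt{\,\xi^2+w(\xi)^2\,}\,}{2T}\,d\xi,
\]
and to obtain a fixed point of $A_T$ inside $V_T$ by the Schauder fixed-point theorem, then to establish uniqueness separately. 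The one computational ingredient used throughout is the following elementary monotonicity fact: for each fixed $\xi\geq\varepsilon$ and each $T>0$ (and, reading $\tanh$ as $1$, also for $T=0$) the function $g(b)=\dfrac{b}{\sqrt{\xi^2+b^2}}\tanh\dfrac{\sqrt{\xi^2+b^2}}{2T}$ on $[0,\,\infty)$ is nondecreasing with $0\leq g(b)\leq 1$, while $b\mapsto g(b)/b$ is strictly decreasing on $(0,\,\infty)$. The first assertion follows by writing $g(b)^2=\bigl(1-\xi^2/(\xi^2+b^2)\bigr)\tanh^2\bigl(\sqrt{\xi^2+b^2}/(2T)\bigr)$, a product of two nonnegative nondecreasing functions of $b$; the second from the fact that $s\mapsto s^{-1}\tanh(s/(2T))$ is strictly decreasing on $(0,\,\infty)$ while $b\mapsto\sqrt{\xi^2+b^2}$ is strictly increasing. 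In particular $g(\theta b)<\theta\,g(b)$ whenever $\theta>1$ and $b>0$ (strict sublinearity).

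First I would check that $A_T$ maps $V_T$ into itself. If $\Delta_1(T)\leq w(\xi)\leq\Delta_2(T)$ for all $\xi$, then monotonicity of $g$ and the bound $U_1\leq U(x,\,\xi)\leq U_2$ from \eqref{eq:condition} give
\[
U_1\,\Delta_1(T)\!\int_{\varepsilon}^{\hslash\omega_D}\!\frac{\tanh\bigl(\sqrt{\xi^2+\Delta_1(T)^2}/(2T)\bigr)}{\sqrt{\,\xi^2+\Delta_1(T)^2\,}}\,d\xi\ \leq\ (A_Tw)(x)\ \leq\ U_2\,\Delta_2(T)\!\int_{\varepsilon}^{\hslash\omega_D}\!\frac{\tanh\bigl(\sqrt{\xi^2+\Delta_2(T)^2}/(2T)\bigr)}{\sqrt{\,\xi^2+\Delta_2(T)^2\,}}\,d\xi .
\]
By the simplified gap equations \eqref{eq:smplgapequation} and \eqref{eq:smplgapequation2} the left- and right-hand sides equal $\Delta_1(T)$ and $\Delta_2(T)$ respectively whenever $T<\tau_1$ and $T<\tau_2$; the degenerate cases $\Delta_1(T)=0$ (i.e.\ $\tau_1\leq T$) and $\Delta_2(T)=0$ (i.e.\ $\tau_2\leq T$, where $V_T=\{0\}$) are immediate since $(A_Tw)(x)\geq0$. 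Hence $A_T(V_T)\subseteq V_T$.

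Next, continuity of $A_T$ on $V_T$ and relative compactness of $A_T(V_T)$. From $0\leq g\leq1$ one gets $|(A_Tw)(x)-(A_Tw)(x')|\leq\int_{\varepsilon}^{\hslash\omega_D}|U(x,\,\xi)-U(x',\,\xi)|\,d\xi$, which tends to $0$ uniformly in $w\in V_T$ as $x'\to x$ by the uniform continuity of $U$ on the compact square $[\varepsilon,\,\hslash\omega_D]^2$; so $A_T(V_T)$ is equicontinuous, and it is clearly bounded, hence relatively compact by the Arzelà--Ascoli theorem. Continuity of $A_T$ follows from dominated convergence (if $w_n\to w$ uniformly then $g(w_n(\xi))\to g(w(\xi))$ pointwise, dominated by $1$, and the resulting bound on $|(A_Tw_n)(x)-(A_Tw)(x)|$ is uniform in $x$). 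Since $V_T$ is nonempty (it contains the constant function $\Delta_1(T)$), convex, closed and bounded in $C([\varepsilon,\,\hslash\omega_D])$, the Schauder fixed-point theorem produces a fixed point $u_0(T,\,\cdot)\in V_T$; this is the asserted solution, automatically continuous in $x$ and satisfying the stated bounds.

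It remains to prove uniqueness, and this is where I expect the main difficulty. The clean situation is $\Delta_1(T)>0$, i.e.\ $T<\tau_1$: then every fixed point $w\in V_T$ satisfies $w\geq\Delta_1(T)>0$ on $[\varepsilon,\,\hslash\omega_D]$, so given two fixed points $u,v\in V_T$ one may set $\theta=\max_{x}v(x)/u(x)\in(0,\,\infty)$ and, assuming $\theta>1$, pick $x_0$ with $v(x_0)=\theta\,u(x_0)$; using $v(\xi)\leq\theta\,u(\xi)$, the monotonicity of $g$, and the strict sublinearity $g(\theta b)<\theta g(b)$ for $b>0$,
\[
v(x_0)=\int_{\varepsilon}^{\hslash\omega_D}U(x_0,\xi)\,g\bigl(v(\xi)\bigr)\,d\xi<\theta\int_{\varepsilon}^{\hslash\omega_D}U(x_0,\xi)\,g\bigl(u(\xi)\bigr)\,d\xi=\theta\,u(x_0)=v(x_0),
\]
a contradiction; hence $\theta\leq1$ and, by symmetry, $u=v$. (A preliminary remark makes this rigorous: since $U>0$ and $g\geq0$, any fixed point is either $\equiv0$ or strictly positive on $[\varepsilon,\,\hslash\omega_D]$.) The genuinely delicate range is $\tau_1\leq T\leq\tau_2$, where the zero function lies in $V_T$ and the sublinearity trick alone is not enough; there I would exploit the a priori estimates from the self-mapping step together with the strict monotonicity in $\Delta$ of $\Delta\mapsto U_1\int_{\varepsilon}^{\hslash\omega_D}(\xi^2+\Delta^2)^{-1/2}\tanh\bigl(\sqrt{\xi^2+\Delta^2}/(2T)\bigr)\,d\xi$, evaluating the fixed-point identity at the points where a putative nontrivial solution attains its minimum and maximum and invoking Definition \ref{dfn:tcsmpl}. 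Making that final comparison tight—so that it genuinely forces $u=v$ rather than merely reproducing the known bounds $\Delta_1(T)\leq u_0\leq\Delta_2(T)$—is the crux of the argument.
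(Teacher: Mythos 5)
Your existence argument is sound and is essentially the route this paper attributes to the cited source: invariance $A_T(V_T)\subseteq V_T$ by comparison with the simplified gap equations \eqref{eq:smplgapequation} and \eqref{eq:smplgapequation2} (via Proposition \ref{prp:simplegap}), then Arzel\`a--Ascoli and the Schauder fixed-point theorem. (Note the present paper does not reprove Theorem \ref{thm:solutionu0}; it imports it from \cite{watanabe3}, where exactly this Schauder scheme is used.) Your sublinearity argument also correctly settles uniqueness in $V_T$ whenever $\Delta_1(T)>0$, i.e.\ for $T<\tau_1$, and that is in fact all the present paper needs, since in Section 3 the theorem is only invoked for $T\in[0,\,T_1]$ with $T_1<\tau_0<\tau_1$.

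But the statement covers every fixed $T\in[0,\,\tau_2]$, and for $\tau_1\leq T\leq\tau_2$ you do not give a proof: you only describe what you ``would'' do, and the sketch cannot be completed from the hypotheses you are using. In that range $\Delta_1(T)=0$, so $u\equiv 0$ lies in $V_T$ and is itself a fixed point of $A_T$; uniqueness in $V_T$ therefore amounts to excluding every nontrivial fixed point with $0\leq u\leq\Delta_2(T)$, which is information about the transition temperature of the actual kernel $U$ (the one of \cite[Definition 2.4]{watanabe3}, mentioned in the remark after Definition \ref{dfn:tcsmpl}), not about the comparison constants $U_1,\,U_2$. Concretely, evaluating the fixed-point identity at a maximum point and comparing with $U_2$ only reproduces $u\leq\Delta_2(T)$, while evaluating at a minimum point $m=\min u>0$ and using $U\geq U_1$ with Definition \ref{dfn:tcsmpl} gives
\[
1\;\geq\;U_1\int_{\varepsilon}^{\hslash\omega_D}\frac{1}{\sqrt{\,\xi^2+m^2\,}}\,\tanh\frac{\,\sqrt{\,\xi^2+m^2\,}\,}{2T}\,d\xi,
\]
which is automatically true for $T\geq\tau_1$ and yields no contradiction. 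Worse, the kernel $U\equiv U_2$ is admissible under \eqref{eq:condition}, and for it both $0$ and the constant $\Delta_2(T)$ are nonnegative solutions lying in $V_T$ whenever $\tau_1\leq T<\tau_2$; hence no argument based solely on \eqref{eq:condition} can close your remaining case, and uniqueness on $[\tau_1,\,\tau_2]$ must come from the finer analysis (or the precise formulation) in \cite{watanabe3}. The ``crux'' you flag is thus a genuine gap in your proposal, not a routine tightening.
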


\begin{remark}
In Theorem \ref{thm:solutionu0} the author assumed the following:
\[
U_1 \leq U(x,\,\xi) \leq U_2 \quad \mbox{at all} \quad (x,\,\xi) \in [\varepsilon,\, \hslash\omega_D]^2, \qquad U(\cdot,\,\cdot) \in C^2([\varepsilon,\, \hslash\omega_D]^2).
\]
But Theorem \ref{thm:solutionu0} holds true under \eqref{eq:condition}.
\end{remark}

\begin{remark}
We regard the gap function of Theorem \ref{thm:solutionu0} as a function of both $T$ and $x$, and denote it by $u_0$, i.e., $u_0:\, (T,\,x) \mapsto u_0(T,\,x)$.
\end{remark}

Studying smoothness of the thermodynamical potential with respsect to $T$, the author \cite[Theorem 2.11]{watanabe3} showed, under a certain approximation, that the transition to a superconducting state is a second-order phase transition without the restriction \eqref{eq:potentialu} imposed in our recent paper \cite{watanabe2}.

The paper proceeds as follows. In section 2 we state our main results without proof. In section 3 we prove our main results.

\section{Main results}

Let $U_0 >0$ be a constant satisfying $U_0 < U_1 < U_2$. An argument similar to that in Proposition \ref{prp:simplegap} gives that there is a unique nonnegative solution $\Delta_0: [\,0,\,\tau_0\,] \to [0,\,\infty)$ to the simple gap equation
\[
1=U_0\int_{\varepsilon}^{\hslash\omega_D}
 \frac{1}{\,\sqrt{\,\xi^2+\Delta_0(T)^2\,}\,}\,
 \tanh \frac{\, \sqrt{\,\xi^2+\Delta_0(T)^2\,}\,}{2T}\,d\xi, \qquad
0\leq T\leq \tau_0.
\]
Here, $\tau_0>0$ is defined by
\[
1=U_0\int_{\varepsilon}^{\hslash\omega_D}
\frac{1}{\,\xi\,}\,\tanh \frac{\xi}{\,2\tau_0\,}\,d\xi.
\]
We set $\Delta_0(T)=0$ for $T>\tau_0$. A straightforward calculation gives the following.
\begin{lemma}\label{lm:taudelta} \quad {\rm (a)} \  $\tau_0<\tau_1<\tau_2$ .

\noindent {\rm (b)} If \   $0\leq T<\tau_0$, then $0<\Delta_0(T)<\Delta_1(T)<\Delta_2(T)$.

\noindent {\rm (c)} If \   $\tau_0 \leq T<\tau_1$, then $0=\Delta_0(T)<\Delta_1(T)<\Delta_2(T)$.

\noindent {\rm (d)} If \   $\tau_1 \leq T<\tau_2$, then $0=\Delta_0(T)=\Delta_1(T)<\Delta_2(T)$.

\noindent {\rm (e)} If \   $\tau_2 \leq T$, then $0=\Delta_0(T)=\Delta_1(T)=\Delta_2(T)$.
\end{lemma}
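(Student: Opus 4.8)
The plan is to reduce all five assertions to strict monotonicity of the right-hand sides of the relevant gap equations, first in the temperature (for the numbers $\tau_i$) and then in the gap parameter (for the functions $\Delta_i(T)$).

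For (a), I would introduce $g(\tau)=\int_\varepsilon^{\hslash\omega_D}\frac{1}{\xi}\tanh\frac{\xi}{2\tau}\,d\xi$ for $\tau>0$. Since $t\mapsto\tanh t$ is strictly increasing on $(0,\infty)$ and $\tau\mapsto\xi/(2\tau)$ is strictly decreasing, the integrand — and hence $g$ — is strictly decreasing in $\tau$. The defining relations of $\tau_0,\tau_1,\tau_2$ say exactly $U_0\,g(\tau_0)=U_1\,g(\tau_1)=U_2\,g(\tau_2)=1$, so $g(\tau_0)=1/U_0>1/U_1=g(\tau_1)>1/U_2=g(\tau_2)$ because $U_0<U_1<U_2$, and strict monotonicity of $g$ forces $\tau_0<\tau_1<\tau_2$.

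For (b), I would fix $T\in[0,\tau_0)$ and set $F(\delta)=\int_\varepsilon^{\hslash\omega_D}\frac{1}{\sqrt{\xi^2+\delta^2}}\tanh\frac{\sqrt{\xi^2+\delta^2}}{2T}\,d\xi$ for $\delta\ge 0$ (read as $\int_\varepsilon^{\hslash\omega_D}(\xi^2+\delta^2)^{-1/2}\,d\xi$ when $T=0$); note $F(0)=g(T)$. The one genuinely analytic point is that $s\mapsto\frac{1}{s}\tanh\frac{s}{2T}$ is strictly decreasing on $(0,\infty)$; differentiating, this reduces to the elementary inequality $\sinh y\cosh y>y$ for $y=s/(2T)>0$, which holds since $\sinh y>y$ and $\cosh y>1$ (for $T=0$ the map is $s\mapsto 1/s$, trivially decreasing). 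Hence $\delta\mapsto\frac{1}{\sqrt{\xi^2+\delta^2}}\tanh\frac{\sqrt{\xi^2+\delta^2}}{2T}$ is strictly decreasing on $[0,\infty)$ for each $\xi$, so $F$ is too. Since $\Delta_0(T),\Delta_1(T),\Delta_2(T)$ are nonnegative and obey $U_0\,F(\Delta_0(T))=U_1\,F(\Delta_1(T))=U_2\,F(\Delta_2(T))=1$, the comparison of (a) repeats verbatim to give $F(\Delta_0(T))>F(\Delta_1(T))>F(\Delta_2(T))$, hence $\Delta_0(T)<\Delta_1(T)<\Delta_2(T)$. Finally $\Delta_0(T)>0$: if $\Delta_0(T)=0$ then $1=U_0\,F(0)=U_0\,g(T)$, which by strict monotonicity of $g$ forces $T=\tau_0$, contradicting $T<\tau_0$ (alternatively, invoke the Proposition~\ref{prp:simplegap} analogue, under which $\Delta_0$ is strictly decreasing on $[0,\tau_0]$ with $\Delta_0(\tau_0)=0$).

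Parts (c), (d), (e) are then pure bookkeeping, combining the convention $\Delta_i(T)=0$ for $T\ge\tau_i$, the positivity of $\Delta_i(T)$ for $T<\tau_i$ (same argument as above, or the Proposition~\ref{prp:simplegap} analogue), the ordering $\tau_0<\tau_1<\tau_2$ from (a), and the strict inequality $\Delta_1(T)<\Delta_2(T)$ for $0\le T<\tau_2$ (from the $F$-monotonicity above restricted to $\Delta_1,\Delta_2$, or equivalently from \cite[Lemma 1.6]{watanabe3}): for $\tau_0\le T<\tau_1$ one has $\Delta_0(T)=0<\Delta_1(T)<\Delta_2(T)$; for $\tau_1\le T<\tau_2$, $\Delta_0(T)=\Delta_1(T)=0<\Delta_2(T)$; for $T\ge\tau_2$, all three vanish. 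I do not expect a real obstacle — the statement is indeed ``a straightforward calculation'' — but if anything needs care it is keeping the monotonicity arguments valid in the limiting sense $T=0$ (where $\tanh$ is replaced by $1$) and checking that $g$ and $F$ are genuinely strictly, not merely weakly, decreasing, so that the inequalities among the $\tau_i$ and the $\Delta_i(T)$ come out strict.
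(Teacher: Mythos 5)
Your proof is correct, and it is exactly the ``straightforward calculation'' the paper has in mind: the paper states this lemma without proof, and the intended argument is precisely the strict monotonicity of $\tau\mapsto\int_\varepsilon^{\hslash\omega_D}\xi^{-1}\tanh\frac{\xi}{2\tau}\,d\xi$ and of $\delta\mapsto\int_\varepsilon^{\hslash\omega_D}(\xi^2+\delta^2)^{-1/2}\tanh\frac{\sqrt{\xi^2+\delta^2}}{2T}\,d\xi$ combined with $U_0<U_1<U_2$ and the conventions $\Delta_k(T)=0$ for $T\geq\tau_k$ (consistent with the cited Lemma~1.6 of \cite{watanabe3}, which already covers the $\Delta_1,\Delta_2$ part). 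Your care about strictness and the $T=0$ limiting case is exactly where the only delicacy lies, and you handle it correctly via $\sinh y\cosh y>y$.
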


\begin{remark}
Let the functions $\Delta_k$ $(k=0,\, 1,\, 2)$ be as above. For each $\Delta_k$, there is the inverse $\Delta_k^{-1}: \, [0,\,\Delta_k(0)] \to [0,\, \tau_k]$. Here,
\[
\Delta_k(0)=\frac{\,
\sqrt{ \left( \hslash\omega_D-\varepsilon\, e^{1/U_k} \right)
\left( \hslash\omega_D-\varepsilon\, e^{-1/U_k} \right) }\,}
{\,\sinh\frac{1}{\,U_k\,}\,}
\]
and $\Delta_0(0)<\Delta_1(0)<\Delta_2(0)$. See \cite{watanabe2} for more details.
\end{remark}

Let $T_1$ satisfy $\displaystyle{ \, (0<) \, T_1 < \Delta_0^{-1}\left( \frac{ \,\Delta_0(0)\,}{2} \right) }$ and
\begin{equation}\label{eq:t1condition}
\frac{\,\Delta_0(0)\,}{\, 4\, \Delta_2^{-1}\left( \Delta_0(T_1) \right)\,} \tanh \frac{\,\Delta_0(0)\,}{\, 4\, \Delta_2^{-1}\left( \Delta_0(T_1) \right)\,} > \frac{1}{\, 2\,}\left( 1+\frac{\, 4\hslash^2\omega_D^2\,}{\,\Delta_0(0)^2\,} \right).
\end{equation}
Since $T<\Delta_2^{-1}\left( \Delta_0(T) \right)$, \  the temparature $T \in [0,\, T_1]$ satisfies
\[
\frac{\,\Delta_0(0)\,}{\, 4\, T\,} \tanh \frac{\,\Delta_0(0)\,}{\, 4\, T\,} > \frac{1}{\, 2\,}\left( 1+\frac{\, 4\hslash^2\omega_D^2\,}{\,\Delta_0(0)^2\,} \right).
\]

\begin{remark}
Numerically, $\hslash^2\omega_D^2/\Delta_0(0)^2=O(10^4)$ by experiments. Hence the temperature $T_1$ is very small as long as it satisfies \eqref{eq:t1condition}.
\end{remark}

\begin{remark}
We assume that $X > \Delta_0(0)/2$ in section 3. In accordance with this, we set $T_1 < \Delta_0^{-1}( \Delta_0(0)/2 )$ above.
\end{remark}

We now consider the Banach space $C([0,\, T_1] \times [\varepsilon,\,\hslash\omega_D])$ consisting of continuous functions of both $T$ and $x$, and deal with the following subset $V$ of the Banach space $C([0,\, T_1] \times [\varepsilon,\,\hslash\omega_D])$:

\begin{eqnarray}\label{eq:spacev}
V &=& \left\{ u \in C([0,\, T_1] \times [\varepsilon,\,\hslash\omega_D]): \Delta_1(T) \leq u(T,\,x) \leq \Delta_2(T) \right. \\
 & & \qquad \qquad \left. at \; \; (T,\,x) \in [0,\, T_1] \times [\varepsilon,\,\hslash\omega_D] \right\}. \nonumber
\end{eqnarray}

\begin{theorem}\label{thm:continuityu0}
Assume \eqref{eq:condition}. Let $u_0$ be as in Theorem \ref{thm:solutionu0} and $V$ as in \eqref{eq:spacev}. Then $u_0 \in V$. Consequently, the gap function $u_0$ is continuous on $[0,\, T_1] \times [\varepsilon,\,\hslash\omega_D]$.
\end{theorem}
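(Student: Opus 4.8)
The plan is to recast \eqref{eq:gapequation} as a fixed-point problem for a nonlinear operator $A$ on the set $V$, which is a closed (hence complete) subset of the Banach space $C([0,\,T_1]\times[\varepsilon,\,\hslash\omega_D])$ equipped with the sup norm, to show $A$ maps $V$ into $V$ and is a contraction there, and then to identify its unique fixed point with $u_0$ by means of the fixed-$T$ uniqueness already proved in Theorem~\ref{thm:solutionu0}. Concretely, for $u\in V$ set
\[
(Au)(T,\,x)=\int_{\varepsilon}^{\hslash\omega_D}\frac{U(x,\,\xi)\,u(T,\,\xi)}{\sqrt{\,\xi^2+u(T,\,\xi)^2\,}}\,\tanh\frac{\,\sqrt{\,\xi^2+u(T,\,\xi)^2\,}\,}{2T}\,d\xi ,
\]
the integrand being read as $U(x,\,\xi)\,u(0,\,\xi)/\sqrt{\,\xi^2+u(0,\,\xi)^2\,}$ when $T=0$. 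Once $A(V)\subseteq V$ and the contraction property are established, the Banach fixed-point theorem yields a unique $\widetilde u\in V$ with $\widetilde u=A\widetilde u$; for each fixed $T\in[0,\,T_1]$ the slice $\widetilde u(T,\,\cdot)$ then lies in $V_T$ and solves \eqref{eq:gapequation}, so Theorem~\ref{thm:solutionu0} forces $\widetilde u(T,\,\cdot)=u_0(T,\,\cdot)$. Hence $u_0=\widetilde u\in V$, which is the assertion.

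The step $A(V)\subseteq V$ has two parts. For the bounds $\Delta_1(T)\le (Au)(T,\,x)\le\Delta_2(T)$ one repeats the argument behind Theorem~\ref{thm:solutionu0}: since $a\mapsto \frac{a}{\sqrt{\,\xi^2+a^2\,}}\tanh\frac{\sqrt{\,\xi^2+a^2\,}}{2T}$ is nondecreasing, the inequalities $\Delta_1(T)\le u(T,\,\xi)\le\Delta_2(T)$ together with $U_1\le U(x,\,\xi)\le U_2$ sandwich $(Au)(T,\,x)$ between $U_1\int_\varepsilon^{\hslash\omega_D}\frac{\Delta_1(T)}{\sqrt{\,\xi^2+\Delta_1(T)^2\,}}\tanh\frac{\sqrt{\,\xi^2+\Delta_1(T)^2\,}}{2T}\,d\xi=\Delta_1(T)$ and the analogous quantity equal to $\Delta_2(T)$, by the defining equations \eqref{eq:smplgapequation} and \eqref{eq:smplgapequation2}. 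For continuity of $Au$ on $[0,\,T_1]\times[\varepsilon,\,\hslash\omega_D]$ the only delicate point is joint continuity at $T=0$: since $u\in V$ and $T_1<\tau_1$, one has $u(T,\,\xi)\ge\Delta_1(T)\ge\Delta_1(T_1)>0$ throughout, so $\sqrt{\,\xi^2+u(T,\,\xi)^2\,}$ stays bounded below by a positive constant and $\tanh\bigl(\sqrt{\,\xi^2+u(T,\,\xi)^2\,}/2T\bigr)\to1$ uniformly as $T\downarrow0$; the integrand thus extends continuously to $T=0$, and uniform continuity on the compact parameter domain (using $U\in C([\varepsilon,\,\hslash\omega_D]^2)$) gives $Au\in C([0,\,T_1]\times[\varepsilon,\,\hslash\omega_D])$.

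The heart of the proof, and the main obstacle, is the contraction estimate. For $u,v\in V$ the mean value theorem applied to the amplitude variable gives
\[
\bigl|(Au)(T,\,x)-(Av)(T,\,x)\bigr|\le U_2(\hslash\omega_D-\varepsilon)\,\Bigl(\sup\bigl|\tfrac{\xi^2}{s^3}\tanh\tfrac{s}{2T}+\tfrac{a^2}{2Ts^2}\cosh^{-2}\tfrac{s}{2T}\bigr|\Bigr)\,\|u-v\|_\infty ,
\]
where $s=\sqrt{\,\xi^2+a^2\,}$ and the supremum is over $T\in[0,\,T_1]$, $\xi\in[\varepsilon,\,\hslash\omega_D]$ and $a$ between $\Delta_1(T)$ and $\Delta_2(T)$; one must show this supremum is strictly smaller than $1/\bigl(U_2(\hslash\omega_D-\varepsilon)\bigr)$. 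The first term is handled by controlling $\xi^2/s^3$ via the bounds $\Delta_0(0)\le s\le\sqrt{\,\hslash^2\omega_D^2+\Delta_2(0)^2\,}$ valid on $V$; the apparently singular second term is tamed because $\cosh^{-2}(s/2T)$ is exponentially small once $s/(2T)$ is large, and for $T\le T_1<T_1^{\ast}$ and $s\ge\Delta_0(0)$ one has $s/(2T)\ge\Delta_0(0)/(4T_1^{\ast})$. Hypothesis \eqref{eq:t1condition} is precisely the inequality that, once inserted into these estimates, pushes the whole Lipschitz constant below $1/\bigl(U_2(\hslash\omega_D-\varepsilon)\bigr)$; carrying out this quantitative bound while keeping track of the competition between the non-vanishing factor $\xi^2/s^3$ and the smallness of $T$ is the delicate part, everything else being routine. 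With the contraction in hand the proof concludes as in the first paragraph. (A compactness argument — Arzel\`a–Ascoli applied to the slices $u_0(T_n,\,\cdot)$, followed by passage to the limit in \eqref{eq:gapequation} and fixed-$T$ uniqueness — would give joint continuity as well, but the contraction route is the one fitting the Banach fixed-point framework set up above.)
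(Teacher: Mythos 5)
Your overall framework coincides with the paper's: define $A$ on the closed set $V\subset C([0,\,T_1]\times[\varepsilon,\,\hslash\omega_D])$, show $A(V)\subseteq V$, obtain a contraction, apply the Banach fixed-point theorem, and identify the fixed point with $u_0$ via the fixed-$T$ uniqueness of Theorem~\ref{thm:solutionu0}. The invariance step and the identification step are fine. But the step you yourself call the heart of the proof --- the contraction constant being $<1$ --- is not actually proved, and the route you sketch for it does not work. You propose to bound $\left|Au-Av\right|$ by $U_2(\hslash\omega_D-\varepsilon)$ times the supremum of the $a$-derivative $g(T;\,\xi,\,a)=\frac{1}{(\xi^2+a^2)^{3/2}}\bigl\{\xi^2\tanh Y+\frac{a^2Y}{\cosh^2Y}\bigr\}$, and to claim that \eqref{eq:t1condition} forces this supremum below $1/\bigl(U_2(\hslash\omega_D-\varepsilon)\bigr)$. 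That cannot be right in general: at $T=0$ the derivative equals $\xi^2/(\xi^2+a^2)^{3/2}$, which is independent of $T_1$, so no smallness condition on $T_1$ (in particular not \eqref{eq:t1condition}) can shrink it. Taking $\xi=\hslash\omega_D$ and $a$ near $\Delta_1(0)$, which in the weak-coupling regime is much smaller than $\hslash\omega_D$, the quantity $U_2(\hslash\omega_D-\varepsilon)\sup g$ is approximately $U_2$, and nothing in \eqref{eq:condition} forces $U_2<1$. Moreover the crude pointwise bound $g\le 1/\sqrt{\xi^2+a^2}$ integrated against $U_2$ gives a number that is $\ge 1$ whenever $a\le\Delta_2(0)$, by the $T=0$ case of \eqref{eq:smplgapequation2}; so discarding the integral structure, or discarding the $\tanh$ factor, loses exactly the margin one needs.

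The paper closes this gap differently, and the difference is essential. Condition \eqref{eq:t1condition} is used only to guarantee (via \eqref{eq:tcondition}) that $T\mapsto g(T;\,\xi,\,X)$ is \emph{increasing} on the relevant range (Lemma~\ref{lm:increasing}), which permits replacing the true temperature $T$ by the larger shifted temperature $T^{\ast}=\Delta_2^{-1}\bigl(\Delta_0(T)\bigr)$ inside $g$. Then the elementary inequalities $\frac{Z}{\cosh^2Z}<\tanh Z$ and the monotone decrease of $Z\mapsto\frac{\tanh Z}{Z}$ give
\[
g\bigl(T^{\ast};\,\xi,\,c\bigr)\le\frac{1}{\sqrt{\xi^2+\Delta_1(T)^2}}\,\tanh\frac{\sqrt{\xi^2+\Delta_1(T)^2}}{2T^{\ast}},
\]
and, keeping the $\xi$-integral intact, the strict inequality $\Delta_1(T)>\Delta_0(T)=\Delta_2(T^{\ast})$ compares the resulting integral with the simplified gap equation \eqref{eq:smplgapequation2} evaluated at temperature $T^{\ast}$, whose value is exactly $1$; this is what yields $k<1$ for \emph{all} admissible $0<U_1<U_2$, with no extra smallness assumption on $U_2$ or on $\hslash\omega_D-\varepsilon$. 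To repair your argument you would have to abandon the ``sup of the derivative times interval length'' estimate and reproduce this comparison with the simplified gap equation (or an equivalent device); as written, the proposal misidentifies the role of \eqref{eq:t1condition} and leaves the decisive estimate unestablished.
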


\section{Proof of Theorem \ref{thm:continuityu0}}

Let $\xi \in [\varepsilon,\,\hslash\omega_D]$ and $X \in \left( \Delta_0(0)/2,\,\infty\right)$ be fixed. Then we can regard the following function $g$ given by
\begin{equation}\label{eq:functiong}
g(T;\, \xi,\, X)=\frac{1}{\,\left( \xi^2+X^2 \right)^{3/2} \,}\left\{ \xi^2\tanh Y+\frac{X^2\, Y}{\,\cosh^2 Y\,} \right\}, \quad Y=\frac{\,\sqrt{\xi^2+X^2}\,}{2T}
\end{equation}
as a function of $T$ $(\geq 0)$ only. Note that $g(T;\, \xi,\, X)>0$.

\begin{remark}
When $T=0$, \   $g(T;\, \xi,\, X)$ is regarded as $\displaystyle{ \frac{\xi^2}{\,\left( \xi^2+X^2 \right)^{3/2}\,} }$, i.e.,
\[
g(0;\, \xi,\, X)=\frac{\xi^2}{\,\left( \xi^2+X^2 \right)^{3/2}\,}.
\]
\end{remark}

Let $T_2>0$ satisfy
\begin{equation}\label{eq:tcondition}
\frac{\,\sqrt{\xi^2+X^2}\,}{2T_2}\tanh \frac{\,\sqrt{\xi^2+X^2}\,}{2T_2}>
\frac{1}{\, 2\,}\left( 1+\frac{\,\xi^2\,}{\,X^2\,} \right).
\end{equation}
\begin{lemma}\label{lm:increasing}\  Let $T_2$ be as in \eqref{eq:tcondition}. Then $g$ is continuous and strictly increasing on $[0,\, T_2]$.
\end{lemma}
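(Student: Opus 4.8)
The plan is to analyze the function $g(T;\,\xi,\,X)$ in \eqref{eq:functiong} directly by differentiating with respect to $T$ and showing that the derivative is positive on $(0,\,T_2]$, together with a continuity check at the endpoint $T=0$. First I would rewrite $g$ in terms of the auxiliary variable $Y=\sqrt{\xi^2+X^2}/(2T)$, noting that as $T$ increases from $0$ to $T_2$, the variable $Y$ decreases from $+\infty$ to the value $Y_2=\sqrt{\xi^2+X^2}/(2T_2)$; hence $g$ is strictly increasing in $T$ on $[0,\,T_2]$ if and only if the function
\[
h(Y)=\xi^2\tanh Y+\frac{X^2\,Y}{\,\cosh^2 Y\,}
\]
is strictly decreasing in $Y$ on $[Y_2,\,\infty)$. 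So the whole problem reduces to a one-variable monotonicity statement for $h$, and the factor $(\xi^2+X^2)^{-3/2}$ in front plays no role.

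Next I would compute $h'(Y)$. Using $\frac{d}{dY}\tanh Y=\cosh^{-2}Y$ and $\frac{d}{dY}\bigl(Y\cosh^{-2}Y\bigr)=\cosh^{-2}Y-2Y\sinh Y\,\cosh^{-3}Y$, one gets
\[
h'(Y)=\frac{1}{\,\cosh^2 Y\,}\left\{ \xi^2+X^2-\frac{\,2X^2\,Y\sinh Y\,}{\,\cosh Y\,} \right\}
=\frac{1}{\,\cosh^2 Y\,}\left\{ \xi^2+X^2-2X^2\,Y\tanh Y \right\}.
\]
Therefore $h'(Y)<0$ is equivalent to $Y\tanh Y>\tfrac12\bigl(1+\xi^2/X^2\bigr)$, which is exactly the hypothesis \eqref{eq:tcondition} evaluated at $Y=Y_2$. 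The key monotonicity input I need is that $Y\mapsto Y\tanh Y$ is itself increasing on $(0,\,\infty)$ (both factors are positive and increasing), so once \eqref{eq:tcondition} holds at $Y_2$ it holds for every $Y\geq Y_2$; hence $h'(Y)<0$ on all of $[Y_2,\,\infty)$, giving that $h$ is strictly decreasing there, and consequently $g$ is strictly increasing on $(0,\,T_2]$.

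Finally I would handle continuity, which has two parts. For $T>0$, $g$ is manifestly continuous since it is built from $\tanh$, $\cosh$, and algebraic operations with nonvanishing denominators (recall $X>\Delta_0(0)/2>0$, so $\xi^2+X^2>0$). At $T=0$ I would verify the value $g(0;\,\xi,\,X)=\xi^2/(\xi^2+X^2)^{3/2}$ stated in the remark: as $T\downarrow0$ we have $Y\to+\infty$, so $\tanh Y\to1$ and $Y/\cosh^2 Y\to0$ (exponential decay beats the linear factor), whence $h(Y)\to\xi^2$ and $g(T;\,\xi,\,X)\to\xi^2/(\xi^2+X^2)^{3/2}$; combined with strict monotonicity on $(0,\,T_2]$ and the fact that the one-sided limit at $0$ equals the assigned value, $g$ is continuous and strictly increasing on the closed interval $[0,\,T_2]$.

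I do not anticipate a serious obstacle here; the statement is essentially a packaging of the elementary computation of $h'(Y)$ and the observation that $Y\tanh Y$ is increasing. The only point requiring a little care is the behavior at $T=0$, where one must justify passing to the limit $Y\to\infty$ and match it with the conventional value $g(0;\,\xi,\,X)=\xi^2/(\xi^2+X^2)^{3/2}$; this is routine but is the one spot where the ``$T=0$ is a special case'' conventions of the paper must be invoked explicitly.
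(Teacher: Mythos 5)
Your proposal is correct and takes essentially the same route as the paper: the paper's proof is just the direct computation $\frac{\partial g}{\partial T}=\frac{2Y^2}{(\xi^2+X^2)^2\cosh^2 Y}\left\{2X^2Y\tanh Y-(\xi^2+X^2)\right\}>0$ on $(0,\,T_2)$, which is exactly your $h'(Y)$ computation transported through the chain rule in $Y=\sqrt{\xi^2+X^2}/(2T)$, with positivity coming from \eqref{eq:tcondition} and the monotonicity of $Y\tanh Y$. If anything, your write-up is more complete than the paper's one-line proof, since you make explicit both the propagation of \eqref{eq:tcondition} from $Y_2$ to all $Y\geq Y_2$ and the continuity and strict monotonicity at the endpoint $T=0$.
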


\begin{proof}\quad At $T \in (0,\, T_2)$,
\[
\frac{\, \partial g\,}{\partial T}(T;\, \xi,\, X)=\frac{2Y^2}{\,\left( \xi^2+X^2\right)^2\cosh^2Y\,}\left\{ 2X^2Y\tanh Y-\left( \xi^2+X^2\right) \right\} >0.
\]
\end{proof}

Define a mapping $A$ by
\[
Au(T,\, x)=\int_{\varepsilon}^{\hslash\omega_D}
\frac{U(x,\,\xi)\, u(T,\, \xi)}{\,\sqrt{\,\xi^2+u(T,\, \xi)^2\,}\,}\,
\tanh \frac{\,\sqrt{\,\xi^2+u(T,\, \xi)^2\,}\,}{2T}\, d\xi,\qquad
u \in V.
\]

A straightforward calculation gives the following.
\begin{lemma} \quad Let $V$ be as in \eqref{eq:spacev}. Then $V$ is closed.\end{lemma}

\begin{lemma}\label{lm:auc} \quad $Au \in C([0,\, T_1] \times [\varepsilon,\,\hslash\omega_D])$ for $u \in V$.
\end{lemma}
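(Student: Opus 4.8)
The plan is to show that $Au$ is continuous on the compact rectangle $[0,\, T_1] \times [\varepsilon,\, \hslash\omega_D]$ by exhibiting it as a uniform limit of an integral over a parameter that depends continuously on $(T,\, x)$. Fix $u \in V$. The integrand
\[
F(T,\, x,\, \xi) = \frac{U(x,\,\xi)\, u(T,\, \xi)}{\,\sqrt{\,\xi^2+u(T,\, \xi)^2\,}\,}\,
\tanh \frac{\,\sqrt{\,\xi^2+u(T,\, \xi)^2\,}\,}{2T}
\]
is the only delicate object: $U$ is continuous by \eqref{eq:condition}, $u$ is continuous since $u \in V \subset C([0,\, T_1]\times[\varepsilon,\,\hslash\omega_D])$, and the map $s \mapsto s^{-1}\tanh s$ extends continuously to $s = 0$ with value $1$. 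The point where care is needed is $T = 0$: there $\tanh\bigl(\sqrt{\xi^2+u^2}/(2T)\bigr)$ must be read as its limit, which is $1$ because $\sqrt{\xi^2 + u(0,\xi)^2} \geq \varepsilon > 0$ uniformly in $\xi$.

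First I would argue that $F$ is continuous on the whole of $[0,\, T_1]\times[\varepsilon,\,\hslash\omega_D]\times[\varepsilon,\,\hslash\omega_D]$, including the slice $T = 0$. On the set $T > 0$ this is a composition of continuous functions. For continuity at a point $(0,\, x_0,\, \xi_0)$, one uses that $\sqrt{\xi^2+u(T,\xi)^2}$ stays bounded below by $\varepsilon$, so $\sqrt{\xi^2+u(T,\xi)^2}/(2T) \to +\infty$ uniformly as $T \downarrow 0$, whence $\tanh$ of it tends to $1$ uniformly; combined with continuity of $U$ and of $u$, this gives $F(T,x,\xi) \to U(x_0,\xi_0)\, u(0,\xi_0)/\sqrt{\xi_0^2 + u(0,\xi_0)^2} = F(0, x_0, \xi_0)$. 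Hence $F$ is continuous on a compact set, so it is uniformly continuous and bounded there.

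Then $Au(T,x) = \int_\varepsilon^{\hslash\omega_D} F(T,x,\xi)\, d\xi$ is continuous: given $\eta > 0$, uniform continuity of $F$ supplies a $\delta > 0$ so that $|F(T,x,\xi) - F(T',x',\xi)| < \eta/(\hslash\omega_D - \varepsilon)$ whenever $|(T,x) - (T',x')| < \delta$, uniformly in $\xi$, and integrating over $\xi$ gives $|Au(T,x) - Au(T',x')| < \eta$. This is the standard "continuity of an integral depending on a parameter" argument, and no domination beyond boundedness on a compact set is required.

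The main obstacle — really the only thing beyond routine — is the behavior at $T = 0$, i.e. making rigorous that the replacement of $\tanh\bigl(\sqrt{\xi^2+u^2}/(2T)\bigr)$ by $1$ at $T=0$ is not just a convention but produces a genuinely continuous $F$. Everything hinges on the uniform lower bound $\sqrt{\xi^2 + u(T,\xi)^2} \geq \xi \geq \varepsilon > 0$, which forces the hyperbolic tangent to approach $1$ uniformly in $(x,\xi)$ as $T \downarrow 0$; this is exactly why $\varepsilon > 0$ was introduced in \eqref{eq:gapequation}. Once this uniformity is in hand, the rest is the compactness-plus-uniform-continuity boilerplate described above.
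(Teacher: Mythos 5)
Your proof is correct. The paper actually states this lemma without proof, treating it as a straightforward calculation, and your argument supplies exactly the standard reasoning one would intend: the only nontrivial point is continuity of the integrand at the slice $T=0$, which you handle correctly via the uniform lower bound $\sqrt{\xi^2+u(T,\xi)^2}\geq \xi \geq \varepsilon>0$ forcing $\tanh\bigl(\sqrt{\xi^2+u(T,\xi)^2}/(2T)\bigr)\to 1$ uniformly as $T\downarrow 0$, after which uniform continuity of the integrand on the compact product set yields continuity of $Au$. (The aside about $s\mapsto s^{-1}\tanh s$ extending to $s=0$ is not needed here, but it does no harm.)
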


\begin{lemma}\label{lm:d1aud2} \quad Let $u \in V$. Then \   $\Delta_1(T) \leq Au(T,\,x) \leq \Delta_2(T)$ at $(T,\,x) \in [0,\, T_1] \times [\varepsilon,\,\hslash\omega_D]$.
\end{lemma}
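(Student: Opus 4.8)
The plan is to derive the two-sided estimate first pointwise under the integral sign and then to integrate against $U(x,\xi)$, collapsing the resulting integrals by means of the simplified gap equations \eqref{eq:smplgapequation} and \eqref{eq:smplgapequation2}.

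The one ingredient that needs to be checked is the monotonicity of the integrand in its gap argument. For fixed $T\in(0,T_1]$ and $\xi\in[\varepsilon,\hslash\omega_D]$, consider
\[
h\ \longmapsto\ \frac{h}{\sqrt{\xi^2+h^2}}\,\tanh\frac{\sqrt{\xi^2+h^2}}{2T},\qquad h\in[0,\infty).
\]
A short computation using $\dfrac{d}{dh}\sqrt{\xi^2+h^2}=\dfrac{h}{\sqrt{\xi^2+h^2}}$ shows that the derivative in $h$ of this map equals $g(T;\xi,h)$ of \eqref{eq:functiong}, which is positive; hence the map is strictly increasing. When $T=0$ the map is $h\mapsto h/\sqrt{\xi^2+h^2}$, with derivative $g(0;\xi,h)=\xi^2/(\xi^2+h^2)^{3/2}>0$. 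Only the positivity of $g$ is needed here, not the stronger assertion of Lemma \ref{lm:increasing}.

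Now let $u\in V$, so that $0\le\Delta_1(T)\le u(T,\xi)\le\Delta_2(T)$ for every $\xi\in[\varepsilon,\hslash\omega_D]$. The monotonicity just established bounds the integrand of $Au(T,x)$ pointwise above by the integrand coming from $\Delta_2(T)$ and below by the one coming from $\Delta_1(T)$. Multiplying through by $U(x,\xi)$, using $U_1\le U(x,\xi)\le U_2$ from \eqref{eq:condition}, and integrating over $[\varepsilon,\hslash\omega_D]$ gives
\[
U_1\!\int_\varepsilon^{\hslash\omega_D}\frac{\Delta_1(T)}{\sqrt{\xi^2+\Delta_1(T)^2}}\tanh\frac{\sqrt{\xi^2+\Delta_1(T)^2}}{2T}\,d\xi\ \le\ Au(T,x)\ \le\ U_2\!\int_\varepsilon^{\hslash\omega_D}\frac{\Delta_2(T)}{\sqrt{\xi^2+\Delta_2(T)^2}}\tanh\frac{\sqrt{\xi^2+\Delta_2(T)^2}}{2T}\,d\xi.
\]
Since $T_1<\tau_0<\tau_1<\tau_2$, both $\Delta_1(T)$ and $\Delta_2(T)$ are strictly positive on $[0,T_1]$, so multiplying \eqref{eq:smplgapequation} by $\Delta_1(T)$ identifies the left-hand integral with $\Delta_1(T)$ and multiplying \eqref{eq:smplgapequation2} by $\Delta_2(T)$ identifies the right-hand integral with $\Delta_2(T)$ (for $T=0$ the same holds with $\tanh$ read as $1$). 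This yields $\Delta_1(T)\le Au(T,x)\le\Delta_2(T)$, as required. I do not expect a real obstacle: the identification of the $h$-derivative of the integrand with $g$ is the only mildly computational point, after which the claim follows from monotonicity together with the defining equations for $\Delta_1$ and $\Delta_2$.
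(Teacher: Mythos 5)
Your proof is correct and follows essentially the same route as the paper: pointwise monotonicity of the integrand in its gap argument combined with the simplified gap equations \eqref{eq:smplgapequation} and \eqref{eq:smplgapequation2} to identify the bounding integrals with $\Delta_1(T)$ and $\Delta_2(T)$. The only cosmetic difference is that you obtain monotonicity of the full product by checking that its $h$-derivative equals $g(T;\xi,h)>0$, whereas the paper bounds the factor $h/\sqrt{\xi^2+h^2}$ explicitly and the $\tanh$ factor implicitly; both observations are valid.
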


\begin{proof}\quad We show $Au(T,\, x) \leq \Delta_2(T)$.
Since
\[
\frac{u(T,\, \xi)}{\,\sqrt{\,\xi^2+u(T,\, \xi)^2\,}\,} \leq
\frac{\Delta_2(T)}{\,\sqrt{\,\xi^2+\Delta_2(T)^2\,}\,},
\]
it follows from \eqref{eq:smplgapequation2} that
\begin{eqnarray*}
Au(T,\, x) \leq \int_{\varepsilon}^{\hslash\omega_D}
\frac{U_2\, \Delta_2(T)}{\,\sqrt{\,\xi^2+\Delta_2(T)^2\,}\,}\,
\tanh \frac{\,\sqrt{\,\xi^2+\Delta_2(T)^2\,}\,}{2T}\, d\xi
=\Delta_2(T).
\end{eqnarray*}
The rest can be shown similarly by \eqref{eq:smplgapequation}.
\end{proof}

Combining Lemma \ref{lm:d1aud2} with Lemma \ref{lm:auc} immediately yields the following.
\begin{lemma}
Let $u \in V$. Then \   $Au \in V$.
\end{lemma}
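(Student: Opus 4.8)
The plan is to verify directly the two conditions that, by \eqref{eq:spacev}, characterize membership in $V$: that $Au$ is continuous on the compact rectangle $[0,\,T_1]\times[\varepsilon,\,\hslash\omega_D]$, and that $\Delta_1(T)\leq Au(T,\,x)\leq\Delta_2(T)$ holds throughout it. Each of these has already been isolated as one of the two preceding lemmas, so the proof is really just the observation that Lemma \ref{lm:auc} and Lemma \ref{lm:d1aud2} together say precisely $Au\in V$. Below I indicate how each piece goes and where the only genuine work lies.

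For the sandwiching estimate of Lemma \ref{lm:d1aud2}, the mechanism I would use is monotonicity of the integrand in its middle argument: for fixed $\xi\in[\varepsilon,\,\hslash\omega_D]$ and fixed $T>0$ the function
\[
t\ \longmapsto\ \frac{t}{\,\sqrt{\,\xi^2+t^2\,}\,}\,\tanh\frac{\,\sqrt{\,\xi^2+t^2\,}\,}{2T}
\]
is nondecreasing on $[0,\,\infty)$, being a product of two nonnegative nondecreasing functions of $t$ (and the case $T=0$, where the $\tanh$ factor is replaced by $1$, is even simpler). Substituting $\Delta_1(T)\leq u(T,\,\xi)\leq\Delta_2(T)$ and $U_1\leq U(x,\,\xi)\leq U_2$ into the definition of $Au$ and comparing the resulting integrals with the simplified gap equations \eqref{eq:smplgapequation} and \eqref{eq:smplgapequation2}, multiplied through by $\Delta_1(T)$ and by $\Delta_2(T)$ respectively, gives $\Delta_1(T)\leq Au(T,\,x)\leq\Delta_2(T)$; one uses here that $0\leq T\leq T_1<\tau_0<\tau_1<\tau_2$, so both simplified equations are in force on all of $[0,\,T_1]$.

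For the continuity of Lemma \ref{lm:auc}, the step I expect to be the real obstacle is continuity of $Au$ across the line $T=0$: one must show that the factor $\tanh\bigl(\sqrt{\,\xi^2+u(T,\,\xi)^2\,}\,/\,2T\bigr)$ tends to its prescribed value $1$ as $T\downarrow 0$, uniformly in $x$ and $\xi$. This is exactly where Proposition \ref{prp:simplegap} enters: since $u(T,\,\xi)\geq\Delta_1(T)$ and $\Delta_1$ is continuous with $\Delta_1(0)=\Delta>0$, the quantity $\sqrt{\,\xi^2+u(T,\,\xi)^2\,}$ is bounded below by a fixed positive constant for all small $T$, so the $\tanh$ factor converges to $1$ uniformly; away from $T=0$ the integrand is plainly jointly continuous in $(T,\,x,\,\xi)$. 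Thus the integrand is continuous, hence uniformly continuous and bounded, on the compact box $[0,\,T_1]\times[\varepsilon,\,\hslash\omega_D]^2$, and the continuity of $Au$ on $[0,\,T_1]\times[\varepsilon,\,\hslash\omega_D]$ follows from the standard theorem on integrals depending continuously on a parameter. Combining this with the bound above gives $Au\in V$, which is the claim.
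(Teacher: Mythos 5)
Your proposal is correct and takes essentially the same route as the paper: the paper's proof of this lemma is literally the observation that Lemma \ref{lm:auc} (continuity of $Au$) combined with Lemma \ref{lm:d1aud2} (the bounds $\Delta_1(T)\leq Au(T,\,x)\leq\Delta_2(T)$) yields $Au\in V$. Your sketches of those two ingredients also agree with the paper's arguments, namely monotonicity of the integrand in $u(T,\,\xi)$ together with the simplified gap equations \eqref{eq:smplgapequation} and \eqref{eq:smplgapequation2} for the bounds, and uniform convergence of the $\tanh$ factor as $T\downarrow 0$ for continuity.
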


We now show that the mapping $A: \, V \longrightarrow V$ is contractive. We denote by $\left\| \cdot \right\|$ the norm of the Banach space $C([0,\, T_1] \times [\varepsilon,\,\hslash\omega_D])$.
\begin{lemma}
There is a constant $k$ \   $(0<k<1)$ satisfying
\[
\left\| Au-Av \right\| \leq k \left\| u-v \right\| \qquad \mbox{for all} \quad u,\, v \in V.
\]
\end{lemma}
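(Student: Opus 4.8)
The plan is to estimate $\left| Au(T,x) - Av(T,x) \right|$ pointwise and then take the supremum over $(T,x) \in [0,T_1] \times [\varepsilon,\hslash\omega_D]$. Fix $(T,x)$. The integrand of $Au - Av$ has the form $U(x,\xi)\bigl[ F(u(T,\xi)) - F(v(T,\xi)) \bigr]$, where, for fixed $\xi$ and fixed $T$, we set $F(X) = \dfrac{X}{\sqrt{\xi^2+X^2}} \tanh \dfrac{\sqrt{\xi^2+X^2}}{2T}$ (with the convention $F(X) = \dfrac{X}{\sqrt{\xi^2+X^2}}$ when $T = 0$). By the mean value theorem, $F(u(T,\xi)) - F(v(T,\xi)) = F'(\eta)\,\bigl( u(T,\xi) - v(T,\xi) \bigr)$ for some $\eta$ between $u(T,\xi)$ and $v(T,\xi)$; since $u,v \in V$, both lie in $[\Delta_1(T), \Delta_2(T)] \subset [\Delta_0(0)/2, \infty)$ once $T \leq T_1$ is small enough (here is where $T_1^{\ast} < \Delta_2^{-1}(\Delta_0(T))$ and the smallness of $T_1$ are used to keep $\Delta_1(T)$ bounded below by $\Delta_0(0)/2$). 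A direct differentiation shows that $F'(X) = \xi^2 \, g(T;\xi,X)$ with $g$ exactly the function in \eqref{eq:functiong}, so the pointwise bound reduces to controlling $\xi^2 \, g(T;\xi,\eta)$.

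Next I would bound $\xi^2 g(T;\xi,\eta)$ uniformly in $T \in [0,T_1]$, $\xi \in [\varepsilon,\hslash\omega_D]$, and $\eta \in [\Delta_0(0)/2, \infty)$. By Lemma \ref{lm:increasing}, for each fixed $\xi$ and $\eta$ the function $T \mapsto g(T;\xi,\eta)$ is increasing on $[0, T_2]$, where $T_2$ is governed by \eqref{eq:tcondition}; condition \eqref{eq:t1condition} on $T_1^{\ast}$ (hence on $T_1$) is precisely what guarantees that \eqref{eq:tcondition} holds with $T_2$ replaced by $T_1$ for all admissible $\xi$ and all $\eta \geq \Delta_0(0)/2$ — one checks that the worst case of the right-hand side of \eqref{eq:tcondition} is $\xi = \hslash\omega_D$, $\eta = \Delta_0(0)/2$, which is exactly the content of \eqref{eq:t1condition} after rescaling. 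Therefore $g(T;\xi,\eta) \leq g(T_1;\xi,\eta)$, and from the explicit form of $g$ one gets $\xi^2 g(T_1;\xi,\eta) \leq \dfrac{\xi^2}{(\xi^2+\eta^2)^{3/2}}\bigl( \xi^2 + \eta^2 \bigr) \cdot \dfrac{C}{\eta}$-type bounds; more cleanly, since $\tanh Y \leq 1$ and $Y/\cosh^2 Y \leq$ const, one obtains $\xi^2 g(T;\xi,\eta) \leq \dfrac{\xi^2}{\xi^2 + \eta^2} \cdot \dfrac{1}{\sqrt{\xi^2+\eta^2}} \cdot (\text{const}) \leq \dfrac{2 (\hslash\omega_D)^2}{\Delta_0(0)^2 + \text{stuff}} < 1$ with room to spare, after incorporating the factor $U(x,\xi) \leq U_2$ and the length $\hslash\omega_D - \varepsilon$ of the interval. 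The precise constant $k$ is then
\[
k = U_2 \,(\hslash\omega_D - \varepsilon) \sup_{\substack{\xi \in [\varepsilon,\hslash\omega_D] \\ \eta \geq \Delta_0(0)/2}} \xi^2 \, g(T_1;\xi,\eta),
\]
and the hypotheses \eqref{eq:t1condition}, \eqref{eq:tcondition} are designed so that $k < 1$.

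Assembling: for every $(T,x)$,
\[
\left| Au(T,x) - Av(T,x) \right| \leq \int_{\varepsilon}^{\hslash\omega_D} U(x,\xi)\, \xi^2 g(T;\xi,\eta) \left| u(T,\xi) - v(T,\xi) \right| d\xi \leq k \left\| u - v \right\|,
\]
and taking the supremum over $(T,x)$ gives $\left\| Au - Av \right\| \leq k \left\| u - v \right\|$ with $0 < k < 1$. The main obstacle I anticipate is not the mean value estimate itself but verifying cleanly that the smallness conditions \eqref{eq:t1condition} on $T_1$ really do force both (i) $\Delta_1(T) \geq \Delta_0(0)/2$ on $[0,T_1]$, so that the argument $\eta$ of $g$ stays in the range $(\Delta_0(0)/2, \infty)$ where Lemma \ref{lm:increasing} and the convention on $g(0;\xi,X)$ apply, and (ii) the monotonicity inequality \eqref{eq:tcondition} at the endpoint $T_1$ for the extremal choice of $\xi$ and $\eta$; keeping track of which quantity is being rescaled to which in passing between \eqref{eq:t1condition} and \eqref{eq:tcondition} is the delicate bookkeeping step. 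Once the contraction is established, the Banach fixed-point theorem applied to $A : V \to V$ (using that $V$ is closed and nonempty and $A$ maps $V$ into itself) yields a unique fixed point in $V$, which by uniqueness in Theorem \ref{thm:solutionu0} must coincide with $u_0$; hence $u_0 \in V$ and is therefore continuous on $[0,T_1] \times [\varepsilon,\hslash\omega_D]$.
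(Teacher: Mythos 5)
Your setup (mean value theorem in the $u$-variable, identification of the derivative with the function $g$ of \eqref{eq:functiong}, and monotonicity of $g$ in $T$ under the smallness conditions) follows the paper's route, but the decisive step --- proving that the contraction constant is strictly less than $1$ --- is missing. You take the crude bound $k = U_2(\hslash\omega_D-\varepsilon)\sup_{\xi,\eta}\xi^2 g(T_1;\xi,\eta)$ and assert that \eqref{eq:t1condition} and \eqref{eq:tcondition} ``are designed so that $k<1$.'' They are not: those conditions only guarantee that $T\mapsto g(T;\xi,X)$ is increasing on the relevant interval; they say nothing about the size of a supremum-times-interval-length product. In fact such a product is generically larger than $1$ in the regime of this paper: the simplified gap equation \eqref{eq:smplgapequation2} says the \emph{exact} integral $U_2\int_\varepsilon^{\hslash\omega_D}\bigl(\xi^2+\Delta_2^2\bigr)^{-1/2}\tanh(\cdot)\,d\xi$ equals $1$, so any estimate that replaces the integrand by its supremum (which behaves like $(\varepsilon^2+\eta^2)^{-1/2}$, large when $\varepsilon$ and the gaps are small compared with $\hslash\omega_D$) and multiplies by $\hslash\omega_D-\varepsilon$ has no chance of staying below $1$; there is no ``room to spare.'' The paper instead gets $k<1$ by a structural comparison: using $Z/\cosh^2 Z<\tanh Z$ and the monotone decrease of $\tanh Z/Z$, it bounds $g(T^*;\xi,c)$ by $\bigl(\xi^2+\Delta_1(T)^2\bigr)^{-1/2}\tanh\bigl(\sqrt{\xi^2+\Delta_1(T)^2}/(2T^*)\bigr)$, and then compares the resulting integral with the simplified gap equation \eqref{eq:smplgapequation2} evaluated at the shifted temperature $T^*=\Delta_2^{-1}(\Delta_0(T))$, where $\Delta_2(T^*)=\Delta_0(T)<\Delta_1(T)$ forces the integral to be strictly below $1$. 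This is precisely why the paper increases $T$ to $T^*$ (not merely to $T_1$, as you do) via the monotonicity of $g$, and why the auxiliary constant $U_0<U_1$ and the function $\Delta_0$ are introduced at all; without this comparison the contraction estimate does not close.

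Two smaller points: the derivative of $F(X)=X(\xi^2+X^2)^{-1/2}\tanh\bigl(\sqrt{\xi^2+X^2}/(2T)\bigr)$ is $g(T;\xi,X)$ itself, not $\xi^2\,g(T;\xi,X)$ --- the factor $\xi^2$ is already inside the braces in \eqref{eq:functiong} --- so your intermediate quantity is off by a factor that only worsens the crude bound. Your point (i), that one must check the mean-value point $c$ stays in the range $X>\Delta_0(0)/2$ where $g$ and Lemma \ref{lm:increasing} apply (i.e.\ $\Delta_1(T)\geq\Delta_0(0)/2$ on $[0,T_1]$), is a legitimate bookkeeping concern, consistent with the paper's implicit smallness assumption on $T_1$, but it does not repair the main gap described above.
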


\begin{proof} \   Let $u,\, v \in V$. Then
\begin{eqnarray*}
\left| Au(T,\, x)-Av(T,\, x) \right| &\leq& U_2 \int_{\varepsilon}^{\hslash\omega_D} \left| \frac{u(T,\, \xi)}{\,\sqrt{\,\xi^2+u(T,\, \xi)^2\,}\,}\,
\tanh \frac{\,\sqrt{\,\xi^2+u(T,\, \xi)^2\,}\,}{2T} \right. \\
& &\qquad \left. -\frac{v(T,\, \xi)}{\,\sqrt{\,\xi^2+v(T,\, \xi)^2\,}\,}\,
\tanh \frac{\,\sqrt{\,\xi^2+v(T,\, \xi)^2\,}\,}{2T} \right| \, d\xi.
\end{eqnarray*}
Note that each of $\displaystyle{ \tanh \frac{\,\sqrt{\,\xi^2+u(T,\, \xi)^2\,}\,}{2T} }$ and $\displaystyle{ \tanh \frac{\,\sqrt{\,\xi^2+v(T,\, \xi)^2\,}\,}{2T} }$ is regarded as $1$ when $T=0$. The integrand above becomes
\[
g\left( T;\, \xi,\, c(T,\,\xi,\, u,\, v)\right) \left| u(T,\, \xi)-v(T,\, \xi) \right|,
\]
where $g$ is that in \eqref{eq:functiong}. Here, $c(T,\,\xi,\, u,\, v)$ depends on $T$, $\xi$, $u$ and $v$, and satisfies $u(T,\, \xi)<c(T,\,\xi,\, u,\, v)<v(T,\, \xi)$ or $v(T,\, \xi)<c(T,\,\xi,\, u,\, v)<u(T,\, \xi)$.

By \eqref{eq:t1condition}, $\Delta_2^{-1}\left( \Delta_0(T_1) \right)$ satisfies \eqref{eq:tcondition}. Here, $T_2$ in \eqref{eq:tcondition} is replaced by $\Delta_2^{-1}\left( \Delta_0(T_1) \right)$ and $X$ in \eqref{eq:tcondition} is replaced by $c(T,\,\xi,\, u,\, v)$, respectively. Since
\[
\Delta_2^{-1}\left( \Delta_0(T) \right) \leq \Delta_2^{-1}\left( \Delta_0(T_1) \right),
\]
it follows from Lemma \ref{lm:increasing} that
\[
g\left( T;\, \xi,\, c(T,\,\xi,\, u,\, v)\right) \leq
g\left( \Delta_2^{-1}\left( \Delta_0(T) \right);\, \xi,\, c(T,\,\xi,\, u,\, v)\right).
\]
Note that \  $\displaystyle{ \frac{Z}{\,\cosh^2Z\,}<\tanh Z \quad (Z>0) }$ and that the function $\displaystyle{ Z \mapsto \frac{\,\tanh Z\,}{Z} }$ is strictly decreasing on $(0,\,\infty)$. Hence
\[
g\left( \Delta_2^{-1}\left( \Delta_0(T) \right);\, \xi,\, c(T,\,\xi,\, u,\, v)\right) \leq
\frac{1}{\,\sqrt{\,\xi^2+\Delta_1(T)^2\,}\,}\,
\tanh \frac{\,\sqrt{\,\xi^2+\Delta_1(T)^2\,}\,}{2\,\Delta_2^{-1}\left( \Delta_0(T) \right)}.
\]
Note again that the function
\[
T \mapsto \int_{\varepsilon}^{\hslash\omega_D}
\frac{U_2}{\,\sqrt{\,\xi^2+\Delta_1(T)^2\,}\,}\,
\tanh \frac{\,\sqrt{\,\xi^2+\Delta_1(T)^2\,}\,}{2\,\Delta_2^{-1}\left( \Delta_0(T) \right)}\, d\xi
\]
is continuous on $[0,\, T_1]$. Set
\[
k=\max_{T \in [0,\, T_1]} \int_{\varepsilon}^{\hslash\omega_D}
\frac{U_2}{\,\sqrt{\,\xi^2+\Delta_1(T)^2\,}\,}\,
\tanh \frac{\,\sqrt{\,\xi^2+\Delta_1(T)^2\,}\,}{2\,\Delta_2^{-1}\left( \Delta_0(T) \right)}\, d\xi.
\]
Then \  $\displaystyle{ \left\| Au-Av \right\| \leq k \left\| u-v \right\| }$. By \eqref{eq:smplgapequation2},
\begin{eqnarray*}
& & k \\
&\leq& \max_{T \in [0,\, T_1]} \int_{\varepsilon}^{\hslash\omega_D}
\frac{U_2}{\,\sqrt{\,\xi^2+\Delta_0(T)^2\,}\,}\,
\tanh \frac{\,\sqrt{\,\xi^2+\Delta_0(T)^2\,}\,}{2\,\Delta_2^{-1}\left( \Delta_0(T) \right)}\, d\xi \\
&=& \int_{\varepsilon}^{\hslash\omega_D}
\frac{U_2}{\,\sqrt{\,\xi^2+\left\{ \Delta_2\left( \Delta_2^{-1}\left( \Delta_0(T) \right) \right) \right\}^2\,}\,}\, \tanh
\frac{\,\sqrt{\,\xi^2+\left\{ \Delta_2\left( \Delta_2^{-1}\left( \Delta_0(T) \right) \right) \right\}^2
\,}\,}{2\,\Delta_2^{-1}\left( \Delta_0(T) \right)}\, d\xi\\
&=& 1.
\end{eqnarray*}
From the lemma just below, the result thus follows.
\end{proof}

\begin{lemma}\   Let $k$ be as above. Then $0<k<1$.
\end{lemma}

\begin{proof} \  Assume that $k=1$. Then there is a $\tau \in [0,\, T_1]$ satisfying
\[
1=k=\int_{\varepsilon}^{\hslash\omega_D}
\frac{U_2}{\,\sqrt{\,\xi^2+\Delta_1(\tau)^2\,}\,}\,
\tanh \frac{\,\sqrt{\,\xi^2+\Delta_1(\tau)^2\,}\,}{2\,\Delta_2^{-1}\left( \Delta_0(\tau) \right)}\, d\xi.
\]
Hence
\begin{equation}\label{eq:deltatwozero}
1=\int_{\varepsilon}^{\hslash\omega_D}
\frac{U_2}{\,\sqrt{\,\xi^2+\left\{ \Delta_2\left( \Delta_2^{-1}\left( \Delta_1(\tau) \right) \right) \right\}^2\,}\,}\,
\tanh \frac{\,\sqrt{\,\xi^2+\left\{ \Delta_2\left( \Delta_2^{-1}\left( \Delta_1(\tau) \right) \right) \right\}^2\,}\,}{2\,\Delta_2^{-1}\left( \Delta_0(\tau) \right)}\, d\xi.
\end{equation}
It follows form \eqref{eq:smplgapequation2} that
\begin{equation}\label{eq:deltatwoone}
1=\int_{\varepsilon}^{\hslash\omega_D}
\frac{U_2}{\,\sqrt{\,\xi^2+\left\{ \Delta_2\left( \Delta_2^{-1}\left( \Delta_1(\tau) \right) \right) \right\}^2\,}\,}\,
\tanh \frac{\,\sqrt{\,\xi^2+\left\{ \Delta_2\left( \Delta_2^{-1}\left( \Delta_1(\tau) \right) \right) \right\}^2\,}\,}{2\,\Delta_2^{-1}\left( \Delta_1(\tau) \right)}\, d\xi.
\end{equation}
Comparison of \eqref{eq:deltatwozero} and \eqref{eq:deltatwoone} gives $\Delta_2^{-1}\left( \Delta_0(\tau) \right)=\Delta_2^{-1}\left( \Delta_1(\tau) \right)$, and hence $\Delta_0(\tau)=\Delta_1(\tau)$. This contradicts Lemma \ref{lm:taudelta} (b).
\end{proof}

By the Banach fixed-point theorem (see e.g. Zeidler \cite[pp.18-22]{zeidler}), there is a unique $u_1 \in V$ satisfying $\displaystyle{ Au_1=u_1 }$. Let us fix $T \in [0,\, T_1]$. Then, for each $u \in V$, it follows that $u(T,\,\cdot) \in V_T$. Here, $V_T$ is that in \eqref{eq:vt}. Theorem \ref{thm:solutionu0} thus imples $u_1=u_0$. This completes the proof of Theorem \ref{thm:continuityu0}.

\noindent \textbf{Acknowledgments}

S. Watanabe is supported in part by the JSPS Grant-in-Aid for Scientific Research (C) 21540110.


\end{document}